\newtheorem{theorem}{Theorem}
\newtheorem{lemma}{Lemma}
\newtheorem{remark}{Remark}
\newtheorem{cor}{Corollary}
\begin{document}

\title{ Gaussian Sensor Networks with Adversarial Nodes
\thanks{{This work was supported in part by the NSF under grants CCF-1016861,  CCF-1118075 and  CCF-1111342.}
}}

\author{
 \IEEEauthorblockN{  Emrah Akyol, Kenneth Rose &  Tamer Ba\c{s}ar}
\IEEEauthorblockA{\{eakyol, rose\}@ece.ucsb.edu  &  basar1@illinois.edu\\
UC, Santa Barbara  &  University of Illinois at Urbana-Champaign\\
}}

\maketitle

\begin{abstract}
This paper studies a particular sensor network model which involves one single Gaussian source observed by many sensors, subject to additive independent Gaussian observation noise. Sensors communicate with the receiver over an additive Gaussian multiple access channel. The aim of the receiver is to reconstruct the underlying source with minimum mean squared error. The scenario of interest here is one where some of the sensors act as adversary (jammer): they strive to maximize distortion. We show that the ability of transmitter sensors to secretly agree on a random event, that is ``coordination", plays a key role in the analysis. Depending on the coordination capability of sensors and the receiver,  we consider two problem settings.  The first setting involves transmitters with ``coordination" capabilities in the sense that all transmitters can use identical realization of randomized encoding for each transmission.  In this case, the optimal strategy for the adversary sensors also requires coordination, where they all generate the same realization of independent and identically distributed Gaussian noise.  In the second setting, the transmitter sensors are restricted to use fixed, deterministic encoders and this setting, which corresponds to a Stackelberg game, does not admit a saddle-point solution. We  show that the the optimal strategy for all sensors is uncoded communications where encoding functions of adversaries and transmitters are in opposite directions. For both settings, digital compression and communication is strictly suboptimal. 
\end{abstract}

\begin{keywords}Sensor networks, game theory, uncoded communication, analog mappings, coordinated transmission \end{keywords}

\IEEEpeerreviewmaketitle
\section{Introduction}

 Distributed compression and communication over sensor networks has been an important problem, see e.g. \cite{xiong2004distributed} for an overview. Joint source-channel coding (JSCC) has certain advantages over separate source and channel coding, and several specific aspects of JSCC over sensor networks have been studied in previous work; see e.g.,  \cite{gastpar2005power} and the references therein. In this paper, we extend the game theoretic analysis of the Gaussian test channel \cite{basar1983gaussian,basar1985complete,basar1986solutions,bansal1989communication} to Gaussian sensor networks introduced by \cite{gastpar2008uncoded}. A particular extension of \cite{gastpar2008uncoded} to asymmetric sensor networks was studied in \cite{behroozi2008optimal, behroozi2010optimal}. Communication games within general multiple input-multiple output settings were considered in \cite{kashyap2004correlated ,shafiee2009mutual}. 

In this paper, we consider two settings for the  sensor network model  illustrated in Figure 1 and explained in detail in Section II. The first $M$ sensors (i.e., the transmitters) and the receiver constitute Player 1 (minimizer) and the remaining $K$ sensors (i.e., the adversaries) constitute Player 2 (maximizer). This zero-sum game does not admit a saddle-point in pure strategies (fixed encoding functions), but admits one in mixed  strategies (randomized functions).  In the first setting, the transmitter sensors can use randomized encoders, i.e., all transmitters and the receiver agree on some (pseudo)random sequence, denoted as $\{\gamma\}$ in the paper. We coin the term ``coordination" for this capability and show that it has a pivotal role in the analysis and the implementation of optimal strategies for both the transmitter and adversarial sensors.   In the first setting, we consider the more general case of mixed strategies and present the saddle-point solution in Theorem 1. In the second setting, encoding functions of transmitters are limited to the fixed mappings. This setting can be viewed as a Stackelberg game where Player 1 is the leader, restricted to pure strategies, and Player 2 is the follower, who observes Player 1's choice of pure strategies and plays accordingly. We present  in Theorem 2 the optimal strategies for this Stackelberg game, whose cost is strictly higher than the cost associated with the first setting. The sharp contrast between the two settings underlines the importance of ``coordination" in sensor networks with adversarial nodes.



\section{Problem Definition}
 In general, lowercase letters (e.g., $x$) denote scalars, boldface lowercase (e.g., $\boldsymbol x$) vectors, uppercase (e.g., $U, X$) matrices and random variables, and boldface uppercase (e.g., $\boldsymbol X$) random
vectors.  $\mathbb E(\cdot)$,  $\mathbb P(\cdot)$ and  $\mathbb R$  denote the expectation and probability operators, and  the set of real numbers respectively. $Bern(p)$ denotes the Bernoulli random variable, taking values $1$ with probability $p$ and $-1$ with $1\!-\!p$. Gaussian distribution with mean $\boldsymbol \mu$ and covariance matrix $R$ is denoted as $\mathcal N(\boldsymbol \mu,R)$. 

The sensor network model is illustrated in Figure 1. The underlying source  $\{S(i)\}$ is a sequence of i.i.d. real valued Gaussian random variables with zero mean and variance $\sigma_S^2$. Sensor $m \in [1\!:\!M\!+\!K]$ observes a sequence $\{U_m(i)\}$  defined as 
\begin{equation}
U_m (i)= S(i)+W_m(i),  
\end{equation}  
where $\{W_m(i)\}$ is a sequence of i.i.d.  Gaussian random variables with zero mean and variance $\sigma_{W_m}^2$, independent of  $\{S(i)\}$.  Sensor $m \in [1\!:\!M\!+\!K]$ can apply arbitrary Borel measurable function $g_m^N:\mathbb R^N \rightarrow \mathbb R$ to the observation sequence of length $N$, $\boldsymbol U_m$ so as to generate  sequence of channel inputs $X_m(i)=g_m^N(\boldsymbol U_m)$ under power constraint: 
\begin{equation}
\lim \limits_{N\rightarrow \infty}\frac{1}{N} \sum \limits_{i=1}^{N} \mathbb E \{X_m^2(i)\} \leq P_m
\label{power}
\end{equation}
The channel output is then given as 
\begin{equation}
Y(i)=Z(i)+\sum \limits_{j=1}^{M+K} X_j(i)
\end{equation}
where $\{Z(i)\}$ is a sequence of i.i.d. Gaussian random variables of zero mean and variance $\sigma_Z^2$, independent of  $\{S(i)\}$ and  $\{W_m(i)\}$. The receiver applies a Borel measurable function $h^N: \mathbb R^N \rightarrow \mathbb R$ to the received sequence $\{Y(i)\}$ to minimize the cost, which is measured as mean squared error (MSE) between the underlying source $S$ and the estimate at the receiver $\hat S$ as
\begin{equation}
J(g_m^N(\cdot),h^{N}(\cdot))=\lim \limits_{N\rightarrow \infty} \frac{1}{N}\sum \limits_{i=1}^{N} \mathbb E \{(S(i)-{\hat S}(i))^2\}
\label{costf}
\end{equation}
for $m=1,2, \ldots, M+K$.


The transmitters $g_m^N(\cdot)$ for $m\in [1\!:\!M]$ and the receiver $h^N(\cdot)$ seek to minimize the cost (\ref{costf}) while the adversaries aim to maximize  (\ref{costf})  by properly choosing  $g_k^N(\cdot)$ for $k\in [M\!+\!1\!:\!M\!+\!K]$. We focus on the symmetric sensor and symmetric source  where $P_m=P_T$  and ${\sigma_W^2}_m={\sigma_W^2}_T$, $\forall m \in [1\!:\!M]$ and ${\sigma_{W_k}^2}={\sigma_{W_T}^2}$ and $P_k=P_A$,  $\forall k \in [M+1\!:\!M\!+\!K]$. 

A transmitter-receiver-adversarial policy ($g_m^{N*},g_k^{N*},h^{N*}$) constitutes a saddle-point solution  if it satisfies the pair of inequalities
\begin{equation}
\label{saddle}
J(g_m^{N*},g_k^{N},h^{N}) \leq J(g_m^{N*},g_k^{N*},h^{N*}) \leq J(g_m^{N},g_k^{N*},h^{N}) 
\end{equation}


\begin{figure}
\centering
\includegraphics[scale=0.45]{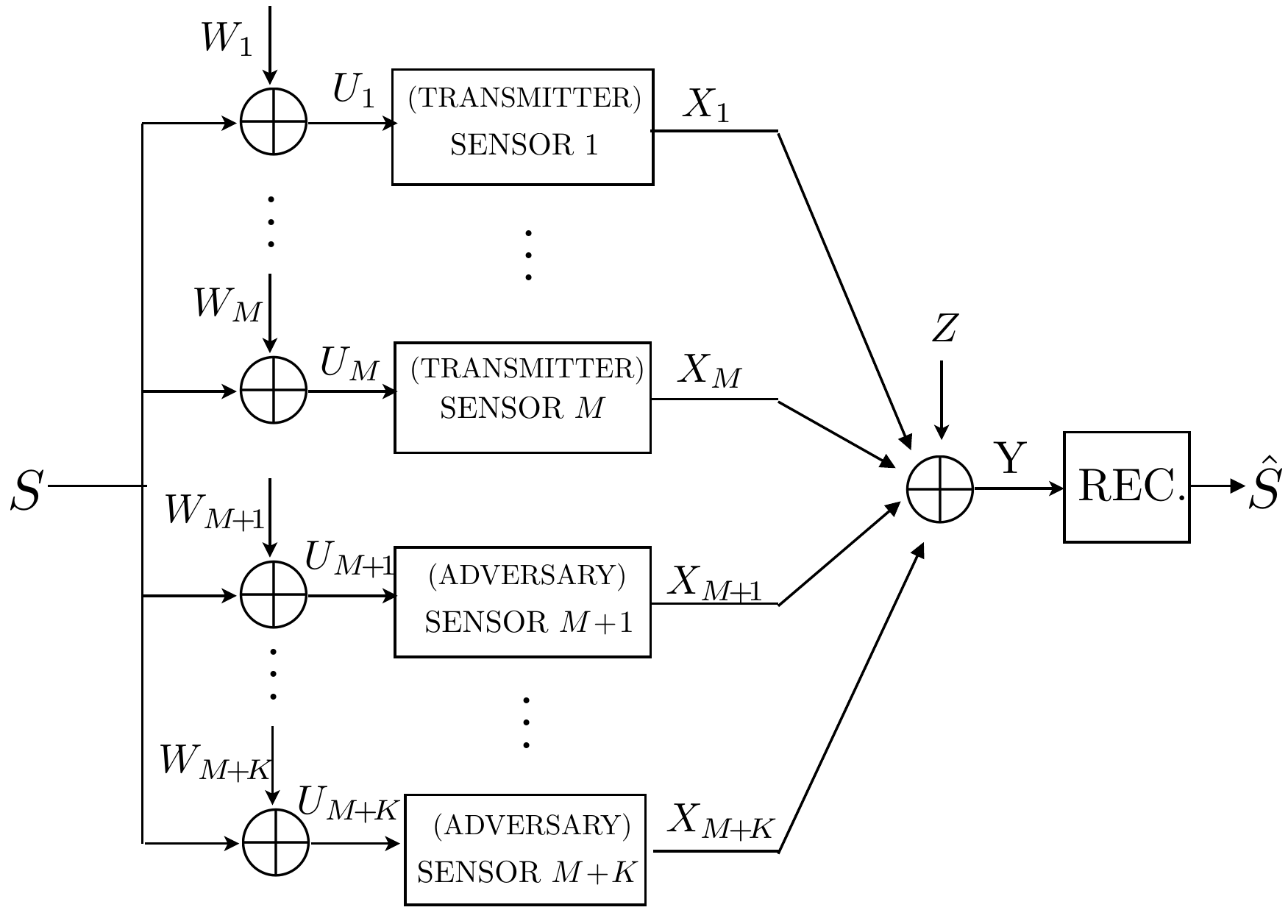}
\label{figure4}
\caption{The sensor network model.}
\end{figure}

\section{Problem Setting I}

The first scenario is concerned with the setting where the transmitter sensors have the ability to {\it coordinate}, i.e., all transmitters and the receiver can  agree on an i.i.d.  sequence of random variables $\{\gamma(i)\}$ generated, for example, by a side channel, the output of which is, however, not available to the adversarial sensors\footnote{An alternative practical method to coordinate is to generate the identical pseudo-random numbers  at each sensor, based on pre-determined seed.}. The ability of coordination allows transmitters and the receiver to agree on randomized encoding mappings. Surprisingly, in this setting, the adversarial sensors also need to coordinate, i.e., agree on an i.i.d. random sequence, denoted as $\{\theta(i)\}$, to generate the optimal jamming strategy.

The saddle-point solution of this problem is presented in the following theorem.

\begin{theorem}
Under scenario 1, the optimal encoding function for the transmitters is randomized uncoded transmission: 
\begin{equation}
 X_m(i)=\gamma(i) \alpha_T U_m(i), \,\,\, M \geq m\geq 1
 \label{rand}
 \end{equation}
where $\gamma(i)$ is i.i.d. Bernoulli ($\frac{1}{2}$) over the alphabet $\{-1,1\}$
$$\gamma(i)\sim Bern (\frac{1}{2}).$$
The optimal jamming function (for adversarial sensors) is to generate i.i.d. Gaussian output
 $$ X_k(i)=\theta(i), \,\,\,  M+K \geq k\geq M+1 $$
 where 
 $$\theta(i)\sim \mathcal N(0,  P_A),$$
and is independent of  the adversarial sensor input $U_k(i)$.
The optimal receiver is the Bayesian estimator of $S$ given $Y$, i.e.,
\begin{equation}
h(Y(i))=\frac{M \alpha_T \sigma_S^2}{ { M\alpha_T^2 \sigma_{S}^2+M^2\alpha_T^2 \sigma_{W_T}^2  +K^2P_A+\sigma_Z^2}}Y(i).
\label{dec2}
 \end{equation}
Cost at this saddle-point is 
\begin{align}
J_2=\sigma_S^2\frac{M^2\alpha_T^2 \sigma_{W_T}^2+K^2P_A+\sigma_Z^2}{ M\alpha_T^2 \sigma_{S}^2+M^2\alpha_T^2 \sigma_{W_T}^2  +K^2P_A+\sigma_Z^2}
\label{cost2}
\end{align}
where $\alpha_T=\sqrt{\frac{P_T}{\sigma_S^2+\sigma_{W_T}^2}} $.
\label{th2}
\end{theorem}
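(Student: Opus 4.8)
The plan is to verify that the proposed triple $(g_m^*,g_k^*,h^*)$ satisfies the two saddle-point inequalities in (\ref{saddle}) by separately solving the two best-response problems: (i) the minimizer's problem of choosing transmitter encoders and a decoder against the fixed Gaussian jammer $g_k^*$, and (ii) the maximizer's problem of choosing adversary encoders against the fixed randomized-uncoded transmitters $g_m^*$ and decoder $h^*$. First I would record that both proposed strategies are power-feasible: with $\alpha_T=\sqrt{P_T/(\sigma_S^2+\sigma_{W_T}^2)}$ one has $\mathbb E\{X_m^2\}=\alpha_T^2(\sigma_S^2+\sigma_{W_T}^2)=P_T$ and $\mathbb E\{X_k^2\}=\mathbb E\{\theta^2\}=P_A$, so (\ref{power}) holds with equality.

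For the right inequality (minimizer optimality), I would fix the adversaries at $X_k(i)=\theta(i)\sim\mathcal N(0,P_A)$. Since they coordinate on a common realization, their aggregate contribution is $\sum_k X_k(i)=K\theta(i)$, which is i.i.d.\ $\mathcal N(0,K^2P_A)$ and independent of $S,\{W_m\},Z$. Hence the effective channel seen by the $M$ transmitters is an additive Gaussian multiple access channel with i.i.d.\ noise of variance $K^2P_A+\sigma_Z^2$, and the problem collapses to the symmetric Gaussian sensor network of \cite{gastpar2008uncoded}. Invoking the matched source--channel optimality of uncoded transmission for that model, the scaled uncoded mappings $X_m=\alpha_T U_m$ together with the linear MMSE decoder minimize (\ref{costf}) over all Borel encoders and decoders; the sign dither $\gamma(i)\in\{-1,1\}$ is MSE-neutral here because the receiver knows $\gamma(i)$ and the residual noise is sign-symmetric. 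As all variables are then jointly Gaussian, the Bayesian estimator is linear and equals $h^*$. Substituting the \emph{coherently} combined transmitter signal power (scaling as $M^2$) and the \emph{incoherently} combined observation-noise power (scaling as $M$) into the scalar MMSE expression yields the saddle value $J_2$ in (\ref{cost2}).

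For the left inequality (maximizer optimality), I would fix $g_m^*$ and $h^*$ and let the adversaries apply an arbitrary $g_k^N$ that, crucially, is a function of $\boldsymbol U_k$ alone and hence independent of $\{\gamma(i)\}$. Writing the sign-compensated decoder output, the per-symbol error splits as $S(i)-\hat S(i)=A(i)-c\,\gamma(i)V(i)$, where $V(i)=\sum_k X_k(i)$ is the aggregate jamming, $c$ the decoder gain, and $A(i)$ collects all terms not carrying the factor $\gamma(i)V(i)$. The key step is the identity $\mathbb E\{\gamma(i)V(i)A(i)\}=0$: since $\gamma(i)$ is independent of $V(i)$ with $\mathbb E\{\gamma(i)\}=0$ and $\gamma(i)^2=1$, every cross term either carries a lone factor $\mathbb E\{\gamma(i)\}$ or collapses to $\mathbb E\{V(i)Z(i)\}=0$. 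Consequently the distortion depends on the adversary only through the jamming power, namely $J=\mathbb E\{A^2\}+c^2\lim_N\frac1N\sum_i\mathbb E\{V(i)^2\}$, and by Cauchy--Schwarz together with the per-node constraints $\mathbb E\{X_k^2\}\le P_A$ one obtains $\lim_N\frac1N\sum_i\mathbb E\{V(i)^2\}\le K^2P_A$, with equality iff the $K$ adversaries emit identical full-power signals. This simultaneously shows that no adversary can exceed $J_2$, that adversarial coordination is necessary to attain it, and that the worst-case value matches the saddle value. Gaussianity of $\theta$ is then pinned down by saddle-point consistency: among all power-$K^2P_A$ jammers, the i.i.d.\ Gaussian one keeps the effective noise Gaussian, so that $h^*$ remains a best response and the minimizer cannot undercut $J_2$ by switching to a nonlinear decoder, whereas a non-Gaussian jammer of equal power would be exploitable and hence not an equilibrium choice.

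I expect the main obstacle to be the converse half of the right inequality, i.e.\ certifying that no coded or nonlinear encoder/decoder pair can beat scaled uncoded transmission against the Gaussian jammer. This is precisely the delicate source--channel matching argument of \cite{gastpar2008uncoded}, where the achievable distortion already meets the corresponding information-theoretic lower bound at bandwidth ratio one; I would adapt it to the noise-inflated channel rather than reprove it from scratch. A secondary technical point is making the decorrelation argument rigorous for arbitrary block-length, possibly nonlinear adversary mappings under the asymptotic power constraint (\ref{power}): the identity $\mathbb E\{\gamma V A\}=0$ holds per symbol because $g_k^N$ is independent of $\boldsymbol\gamma$, but one must pass carefully to the Ces\`aro time average and invoke the constraint to conclude the bound $\lim_N\frac1N\sum_i\mathbb E\{V(i)^2\}\le K^2P_A$.
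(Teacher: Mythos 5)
Your proposal is correct and follows essentially the same route as the paper: verify the two saddle-point inequalities, handling the minimizer side by reducing to the Gaussian sensor network of \cite{gastpar2008uncoded} once the coordinated jammer is fixed as i.i.d.\ Gaussian noise of variance $K^2P_A$, and handling the maximizer side by showing the $Bern(1/2)$ sign dither annihilates all cross terms so the cost depends on the adversary only through second-order statistics. The only difference is that you fold into the main argument the Cauchy--Schwarz bound $\mathbb{E}\{V^2\}\le K^2P_A$ and the saddle-consistency argument for Gaussianity of $\theta$, which the paper defers to Corollaries 1 and 2.
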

\begin{proof}
We prove this result by verifying that the mappings in this theorem satisfy the saddle-point criterion given in (\ref{saddle}), following the approach in \cite{basar1986solutions}.

RHS of (\ref{saddle}): Suppose the policy of the adversarial sensors is given as in Theorem \ref{th2}. Then, the communication system at hand becomes identical to the problem considered in \cite{gastpar2008uncoded}, whose solution is uncoded communication with deterministic, linear encoders, i.e., $ X_m(i)= \alpha_T U_m(i)$. Any probabilistic encoder, given in the form of (\ref{rand}) (irrespective of the density of $\gamma$) yield the same cost  (\ref{cost2}) with deterministic encoders and hence is optimal. 

LHS of (\ref{saddle}): Note that all the adversarial sensors must use the same jamming strategy to maximize the overall cost. Let us derive the overall cost conditioned on the realization of the transmitter mappings (i.e., $\gamma=1$ and $\gamma=-1$) used in conjunction with optimal linear decoders given in (\ref{dec2}). If $\gamma=1$
\begin{equation}
D_1=J_1 +\xi \mathbb E\{SX_k\}+\psi \mathbb E\{ZX_k\}
\label{c1}
\end{equation}
  for some constants $\xi, \psi $, and similarly if  $\gamma=-1$
 \begin{equation}
D_2=J_1 -\xi \mathbb E\{SX_k\}-\psi  \mathbb E\{ZX_k\}
\label{c2}
\end{equation}
where the overall cost is
\begin{equation}
D(i)=\mathbb P(\gamma(i)=1)D_1+\mathbb P(\gamma(i)=-1)D_2 .
\end{equation} 
Clearly,  for $\gamma(i)\sim Bern (\frac{1}{2})$ the overall cost $J_1$  is only a function of the second-order statistics of the adversarial outputs, irrespective of the distribution of $\{\theta(i)\}$, and hence the solution presented here is indeed a saddle-point.
\end{proof}

\begin{cor}
The solution in Theorem 1 is (almost surely) the unique solution for the transmitters, the adversaries (jammer) and the receiver. 
\end{cor}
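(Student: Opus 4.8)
The plan is to prove the almost-sure uniqueness one player at a time, exploiting the fact that any saddle point must be a mutual best response and that the two inequalities in (\ref{saddle}) play complementary roles: the left inequality (adversary optimality) will fix the jammer power, while the right inequality (transmitter optimality) will fix the jammer distribution. Throughout, the qualifier ``almost surely'' is understood to absorb (i) the freedom in the equiprobable sign sequence $\gamma$ and (ii) the a.s.-defined nature of the receiver as a conditional expectation.

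I would dispose of the receiver and the transmitter first. For any fixed pair of transmitter/adversary strategies, the cost (\ref{costf}) is minimized by the Bayesian estimator $h(Y)=\mathbb{E}\{S\mid Y\}$, and since this is the orthogonal $L^2$-projection of $S$ onto the data, it is unique up to a.s. equality; conditioned on $\gamma$ the relevant variables are jointly Gaussian, so it coincides with the linear form (\ref{dec2}). For the transmitter, freezing the Gaussian adversary and the MMSE receiver reduces the system exactly to the Gaussian sensor network of \cite{gastpar2008uncoded}. Invoking the strictness in that converse -- namely that any digital / non-uncoded scheme is strictly suboptimal and that the power constraint (\ref{power}) is necessarily active -- the only distortion-minimizing encoders are the uncoded linear maps with gain $\alpha_T$, up to the sign carried by $\gamma$. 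This establishes transmitter uniqueness modulo the harmless sign freedom.

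The decisive step, and the one I expect to be the main obstacle, is adversary uniqueness. Against the Bernoulli-randomized linear transmitter, the computation in the proof of Theorem \ref{th2} shows the cost depends only on the second-order statistics of the jammer output; hence the left inequality of (\ref{saddle}) forces $\mathbb{E}\{X_k^2\}=P_A$ (full power) but does \emph{not} by itself single out a Gaussian law, since every zero-mean, variance-$P_A$ output independent of $U_k$ achieves the same $J_2$. To pin down Gaussianity one must bring in the right inequality: a saddle-point adversary must leave the transmitter unable to drive the cost below $J_2$ even when the transmitter is allowed arbitrary length-$N$ Borel encoders. If the common jammer output $\theta$ were non-Gaussian, the effective additive channel noise $K\theta+Z$ would be non-Gaussian of the same variance $K^2P_A+\sigma_Z^2$; by the strict worst-case-noise property -- Gaussian noise uniquely minimizes the capacity, equivalently maximizes the optimum attainable distortion, among fixed-variance additive noises, with strict inequality off the Gaussian -- the transmitter could then employ a capacity-exploiting (generally nonlinear, coded) scheme achieving distortion strictly below $J_2$, contradicting the right inequality. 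Therefore the zero-mean Gaussian jammer of variance $P_A$, independent of its observation, is the only adversary consistent with a saddle point.

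Assembling these, the triple $(g_m^{N*},g_k^{N*},h^{N*})$ is almost surely the unique saddle point. The delicate point is that neither inequality alone suffices for the adversary: verifying the candidate (as in Theorem \ref{th2}) needs only the left inequality, whereas ruling out all non-Gaussian deviations -- an ``all competitors'' statement rather than a single best-response check -- requires the right inequality together with the strict form of the worst-case-noise result. I would therefore be most careful in stating and invoking that strictness, since it is exactly what upgrades ``a saddle point'' to ``the unique saddle point.''
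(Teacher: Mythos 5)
Your receiver and Gaussianity arguments are essentially the paper's: the paper likewise pins down the Gaussian law of $\theta$ by observing that a Gaussian jammer makes the zero-delay linear scheme asymptotically optimal while any non-Gaussian noise of the same variance would let the transmitters improve via coding (the worst-case-noise property), i.e.\ exactly your point that the left inequality of (\ref{saddle}) fixes only the power and the right inequality is what forces Gaussianity. That part is fine and well articulated.

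The gap is that you have not established uniqueness of two of the defining features of the saddle-point strategies, and they are precisely the other two items the paper's proof addresses. First, independence of $\theta(i)$ from $U_k(i)$ (equivalently from $S$ and $W$): against the $Bern(1/2)$ transmitter the cross terms $\mathbb E\{SX_k\}$, $\mathbb E\{WX_k\}$ cancel in the average of (\ref{c1}) and (\ref{c2}), so a \emph{correlated} full-power jammer is also a best response to the candidate transmitter and is not excluded by your second-order-statistics argument; the paper rules it out by noting that any nonzero correlation can be exploited by the transmitter tilting its Bernoulli parameter away from $1/2$, so a correlated jammer cannot be part of any saddle point. Second, and dually, the transmitter's randomization law: against the Gaussian, independent jammer \emph{every} distribution of $\gamma$ (including the deterministic one) achieves the same cost $J_2$, so your reduction to \cite{gastpar2008uncoded} fixes only the linear form $\alpha_T U_m$ and not the $Bern(1/2)$ parameter --- your ``harmless sign freedom'' remark conflates the realization of $\gamma$ with its distribution. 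The paper closes this by showing that only $p=1/2$ cancels the cross terms the adversary could otherwise exploit via the left inequality. Without these two interlocking best-response arguments, the corollary's uniqueness claim for the full strategy profile is not proved.
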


\begin{proof}
We start by restating the fact that the optimal solution for transmitter sensors must be in the randomized form given in (\ref{rand}). Let us prove the properties which were not covered by the proof of the saddle-point. 

Gaussianity of $\{\theta(i)\}$: The choice $\theta(i)\sim \mathcal N(0,  P_A)$ maximizes (\ref{costf}) since  it renders the simple uncoded linear mappings asymptotically optimal, i.e., the transmitters  cannot improve on the zero-delay performance by utilizing asymptotically high delays. Moreover, the optimal zero-delay performance is always lower bounded by the performance of the linear mappings, which is imposed by the adversarial choice of $\theta(i)\sim \mathcal N(0,P_A)$. 

Independence of $\{\theta(i)\}$ of $\{S(i)\}$ and $\{W(i)\}$: If the adversarial sensors introduce some correlation, i.e., if $\mathbb E\{SX_k\}\neq0$ or $\mathbb E\{WX_k\}\neq0$, the transmitter can adjust its Bernoulli parameter to decrease the distortion. Hence, the optimal adversarial strategy is setting $\mathbb E\{SX_k\}=\mathbb E\{WX_k\}=0$ which implies independence since all variables are jointly Gaussian. 

Choice of Bernoulli parameter: Note that the optimal choice of the Bernoulli parameter for the transmitters is $\frac{1}{2}$ since other choices will not cancel the cross terms  in (\ref{c1}) and (\ref{c2}), i.e., $\mathbb E\{SX_k\}$ and $\mathbb E\{WX_k\}$. These cross terms can be exploited by the adversary to increase the cost, hence optimal strategy for transmitter is to set $\gamma=Bern(1/2)$.

\end{proof}

\begin{cor}
Coordination is essential for adversarial sensors in the case of coordinating transmitters and receiver, in the sense that  lack of adversarial coordination strictly decreases the overall cost. 
\end{cor}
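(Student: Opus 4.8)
The plan is to show that the saddle-point cost $J_2$ in (\ref{cost2}) is a strictly increasing function of the aggregate adversarial power injected into the channel, and that removing coordination strictly reduces that aggregate power. First I would observe that, as established in the proof of Theorem \ref{th2} and in Corollary 1, the cost depends on the adversarial strategy only through second-order statistics: the cross-correlations $\mathbb E\{SX_k\}$ and $\mathbb E\{ZX_k\}$ (which the optimal adversary drives to zero) and the variance of the aggregate adversarial channel input $V:=\mathrm{Var}\big(\sum_{k=M+1}^{M+K}X_k\big)$. Substituting a general $V$ for the term $K^2P_A$ in (\ref{cost2}) gives $J(V)=\sigma_S^2\,(A+V)/(A+V+B)$ with $A=M^2\alpha_T^2\sigma_{W_T}^2+\sigma_Z^2$ and $B=M\alpha_T^2\sigma_S^2>0$; writing $J(V)=\sigma_S^2\big(1-B/(A+V+B)\big)$ exhibits $J$ as strictly increasing in $V$.

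Next I would compute $V$ in the two regimes. Under coordination the adversaries emit the common realization $\theta(i)$, so $\mathrm{Cov}(X_j,X_k)=P_A$ for every pair and $V_{\mathrm{coord}}=KP_A+K(K-1)P_A=K^2P_A$, recovering (\ref{cost2}). Without coordination the adversaries share no common random sequence; each output is a function of its own observation $U_k=S+W_k$ and private randomness alone. The key step is to argue that this forces the adversarial inputs to be mutually uncorrelated. By Corollary 1 each optimal $X_k$ is independent of $S$ (any residual correlation with $S$ would be exploited by the transmitters' Bernoulli randomization), hence $\mathbb E\{X_k\mid S\}=\mathbb E\{X_k\}$. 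Since $\{W_k\}$ and the private randomness are independent across $k$, given $S$ the signal $X_j$ is a function of $(W_j,\cdot)$ and $X_k$ of $(W_k,\cdot)$, so $X_j\perp X_k\mid S$; therefore for $j\neq k$, $\mathbb E\{X_jX_k\}=\mathbb E\{\mathbb E\{X_j\mid S\}\,\mathbb E\{X_k\mid S\}\}=\mathbb E\{X_j\}\mathbb E\{X_k\}$, i.e. the inputs are uncorrelated (and zero-mean at the optimum). Hence $V_{\mathrm{uncoord}}=\sum_k\mathrm{Var}(X_k)\le KP_A$.

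Finally, for $K\ge 2$ we have $V_{\mathrm{uncoord}}\le KP_A<K^2P_A=V_{\mathrm{coord}}$, and the strict monotonicity of $J(V)$ yields $J_{\mathrm{uncoord}}<J_2$; that is, the inability of the adversaries to coordinate strictly lowers the attainable cost, which is the claim. I expect the main obstacle to be the middle step: rigorously ruling out positive correlation among the uncoordinated adversarial signals. The subtlety is that the adversaries \emph{do} share the source component $S$ through their observations, so one must invoke both the absence of shared randomness (to rule out correlation through private sources) and the independence-from-$S$ property of Corollary 1 (to rule out correlation through $S$); the conditional-independence-given-$S$ identity above is what makes this precise. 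A secondary point worth noting is that the strict gap requires $K\ge 2$, since for a single adversary coordination is vacuous and $K^2P_A=KP_A$.
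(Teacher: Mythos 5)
Your proposal is correct and follows essentially the same route as the paper's own proof: the paper simply notes that coordination yields aggregate adversarial noise variance $K^2P_A$ versus $KP_A$ without it, and that the cost expression is smaller with the smaller variance. Your additional work --- the explicit monotonicity of the cost in the aggregate variance $V$, the conditional-independence-given-$S$ argument ruling out correlation among uncoordinated adversaries, and the observation that strictness requires $K\ge 2$ --- fills in details the paper leaves implicit, but does not change the underlying argument.
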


\begin{proof}
Note that coordination enables adversarial sensors to create a Gaussian noise with variance $K^2P_A$  yielding the cost in (\ref{cost2}). However, without coordination, each sensor can only generate independent Gaussian random variables, yielding an overall Gaussian noise with variance $KP_A$ and the total cost 
\begin{equation}J_{2}=\sigma_S^2\frac{M^2\alpha_T^2 \sigma_{W_T}^2+KP_A+\sigma_Z^2}{ M\alpha_T^2 \sigma_{S}^2+M^2\alpha_T^2 \sigma_{W_T}^2  +KP_A+\sigma_Z^2} < J_1
\end{equation} 
Hence, coordination of adversarial sensors strictly increases the overall cost, i.e., coordination is essential for adversarial sensors in this setting.
\end{proof}

\begin{remark}
Note that the optimality of this jamming strategy does not depend on the ``symmetry" assumption for the adversaries. Hence, it is straightforward to show that in the more general setting of $\sigma^2_{W_{k_1} }\neq \sigma^2_{W_{k_2} }$ and $ P_{k_1} \neq P_{k_2}$, for $ (k_1, k_2)\in [M+1\!:\!M\!+\!K]\ $ our results hold. We do not include such generalizations for brevity.
\end{remark}

\begin{remark}
We note that the optimal strategies do not depend on the sensor index $m$, hence the implementation of the optimal strategy, for both transmitter and adversarial sensors, requires ``coordination" among the sensors. This highlights the need for coordination in game theoretic settings in sensor networks. Note that this coordination requirement arises purely from the game theoretic considerations, i.e., the presence of adversarial sensors. In the case where no adversarial node exists, transmitters do not need to ``coordinate". Moreover, as we will show in Theorem 2 if the transmitters cannot coordinate, then adversarial sensors do not need to coordinate. 
\end{remark}

%

\section{Problem Setting II}
In this section, we focus on the problem, where the transmitters do not have the ability to  secretly agree on a ``coordination" random variable to generate their transmission function $X_k$. In this case, it turns out  that the optimal transmitter strategy, which is almost surely unique, is uncoded transmission with linear mappings, while  the adversarial  optimal strategy for the (jamming) sensors is uncoded, linear mappings with the opposite sign of the transmitter functions. 


The following theorem captures this result.

\begin{theorem}
Under scenario 2, the optimal encoding function for the transmitters is uncoded transmission, i.e., 
$$ X_m(i)=\alpha_T U_m(i), \,\,\, M \geq m\geq 1 $$ 

The optimal jamming function (for adversarial sensors) is uncoded transmission with the opposite sign of the transmitters, i.e.,
 $$ X_k(i)= \alpha_A U_k(i), \,\,\, M+K \geq k\geq M+1 $$

The optimal decoding function is the Bayesian estimator of $S$ given $Y$, i.e.,
$$h(Y(i))\!=\!\frac{\left [(M\alpha_T\!+\!K\alpha_A)\sigma_S^2 \right]Y(i)}{\!(M\alpha_T\!+\!K\alpha_A)\sigma_S^2\!+\!M^2\alpha_T^2\sigma_{W_T}^2\!+\!K^2\alpha_A^2\sigma_{W_A}^2\!+\!\sigma_Z^2}.$$ 
Cost at this saddle-point is 
\begin{align}
J_3=\sigma_S^2\frac{M^2\alpha_T^2\sigma_{W_T}^2+K^2\alpha_A^2\sigma_{W_A}^2+\sigma_Z^2}{(M\alpha_T+K\alpha_A)\sigma_S^2+M^2\alpha_T^2\sigma_{W_T}^2+K^2\alpha_A^2\sigma_{W_A}^2+\sigma_Z^2}
\end{align}
where $\alpha_T=\sqrt{\frac{P_T}{\sigma_S^2+\sigma_{W_T}^2}} $ and $\alpha_A=-\sqrt{\frac{P_A}{\sigma_S^2+\sigma_{W_A}^2}}$.
\end{theorem}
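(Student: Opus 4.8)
The plan is to treat this as a Stackelberg game in which the transmitters (leader) commit to deterministic encoders, the adversarial sensors (follower) observe that choice and best-respond, and the receiver---always a minimizer---applies the Bayes estimator $\mathbb{E}\{S\,|\,Y\}$ to whatever joint law results. Accordingly I would establish the claimed triple in two movements: (i) solve the adversary's inner maximization against the candidate linear transmitters, showing the best response is again uncoded and linear but with opposite sign; and (ii) show that this linear transmitter strategy minimizes the resulting worst-case cost, i.e.\ no deterministic encoder for the transmitters does better once the adversary best-responds. Throughout I would lean on the uncoded-optimality (source--channel matching) result of \cite{gastpar2008uncoded} and the Gaussian jamming arguments of \cite{basar1986solutions}, as in the proof of Theorem~\ref{th2}; the essential difference is that here there is no shared randomness to fall back on, so a pure saddle-point verification is unavailable.

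First I would fix the transmitters at $X_m(i)=\alpha_T U_m(i)$ and reduce the problem to a scalar one. Since every signal is then jointly Gaussian, the optimal receiver is linear and the MSE depends only on the effective source gain $G=M\alpha_T+K\alpha_A$ and the aggregate noise variance contributed by $Z$ and by the observation noises; writing the cost in this form yields the stated decoder directly and, after substituting the gains, the expression $J_3$. By symmetry all adversaries employ a common strategy, so the adversary's scalar problem is to pick its gain to maximize this MSE subject to $\alpha_A^2(\sigma_S^2+\sigma_{W_A}^2)\le P_A$. Because a negative $\alpha_A$ simultaneously shrinks $|G|$ (the legitimate signal partially cancels at the channel output) and enlarges the injected noise, both effects raise the MSE, so the maximizer drives $\alpha_A$ to the power-constraint boundary with negative sign, giving $\alpha_A=-\sqrt{P_A/(\sigma_S^2+\sigma_{W_A}^2)}$.

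The crux is to show this linear best response is optimal among all Borel strategies $X_k=g_k(U_k)$, and here I would decompose $g_k(U_k)=c_kU_k+V_k$ with $V_k$ the component uncorrelated with $U_k$ (hence with $S$). The correlated part $c_kU_k$ is the only piece that can reduce the effective gain $G$, it is itself Gaussian, and it is maximally damaging in the source-canceling direction; the residual $V_k$ merely consumes power budget and adds noise that, if non-Gaussian, the Bayes receiver handles strictly better than worst-case Gaussian noise of equal power. Invoking the extremal/Gaussianity arguments of \cite{basar1986solutions}, the adversary can therefore do no better than concentrate all its power on the linear, source-canceling term, which is precisely $\alpha_A U_k$. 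This settles the upper bound $\max_{g_k}J(g_m^{*},g_k,h^{*})=J_3$ and identifies the follower's best response.

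It remains to prove leader optimality---that for every deterministic transmitter encoder the best-responding adversary forces cost at least $J_3$---and this I expect to be the main obstacle. The clean route is an information-theoretic converse: against any transmitter strategy of power $P_T$, the adversary's opposite-sign linear jamming degrades the effective Gaussian channel so that $I(S;Y)$ cannot exceed its value at the stated operating point, whence the Shannon lower bound on MSE gives $J\ge J_3$, while matched uncoded transmission meets this bound by \cite{gastpar2008uncoded}, closing the gap. The delicate point, and the reason this is harder than Theorem~\ref{th2}, is that without the Bernoulli randomization $\gamma$ of Setting~I the cost is no longer a function of second-order statistics alone, so one cannot merely check a saddle-point inequality; the converse must genuinely exclude all nonlinear transmitter/adversary pairs, and it is the coupling of the leader's and follower's optimizations through the source--channel matching condition that makes the min-max value coincide with $J_3$.
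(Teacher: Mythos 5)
Your overall architecture (follower best response to linear transmitters, then a leader-optimality converse via data processing and source--channel matching) parallels the paper's, and your identification of the negative-sign, power-saturating linear jammer as the best response to linear transmitters is consistent with what the paper asserts (the paper simply cites \cite{basar1986solutions,bansal1989communication} for the claim that the adversary's optimal map is $-\sqrt{P_A/P_T}$ times the transmitter map). But there is a genuine gap exactly where you anticipate one: the leader-optimality converse. You write that ``the adversary's opposite-sign linear jamming degrades the effective Gaussian channel so that $I(S;Y)$ cannot exceed its value at the stated operating point,'' but this presumes the conclusion. The received SNR is governed by $\boldsymbol 1^T R_X \boldsymbol 1=\sum_{p,r}\mathbb E\{g_p^N(\boldsymbol U_p)g_r^N(\boldsymbol U_r)\}$, and one must show that no collection of Borel block encoders of the correlated Gaussian observations can make this sum larger than linear encoders do. The paper's proof supplies the missing tool: Witsenhausen's lemma (Appendix B), which bounds the correlation achievable by arbitrary zero-mean, unit-variance block functions of two sequences by the single-letter maximal correlation, attained by linear maps in the jointly Gaussian case. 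Without this (or an equivalent extremal result), the claim that nonlinear block encoders cannot beat uncoded transmission on the channel side does not follow.

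A second, smaller defect is on the source side of the converse. You invoke ``the Shannon lower bound on MSE,'' i.e.\ the rate--distortion function of $S$ itself, but the encoders observe only the noisy $U_m$, so the correct lower bound on $I(\boldsymbol U_1,\ldots,\boldsymbol U_{M+K};\hat{\boldsymbol S})$ is the remote (CEO-type) rate--distortion function with the irreducible estimation error $D_{est}$ split off, which the paper derives in Appendix A; using $R_S(D)$ directly would give a bound that uncoded transmission does not meet, and the converse would not close. Finally, a minor point in your follower step: the sign of $\alpha_A$ does not ``enlarge the injected noise'' (the term $K^2\alpha_A^2\sigma_{W_A}^2$ depends only on $|\alpha_A|$); the sign matters only through the cancellation of the signal gain $M\alpha_T+K\alpha_A$, and your decomposition $g_k(U_k)=c_kU_k+V_k$ yields a $V_k$ that is merely uncorrelated with, not independent of, $U_k$, so the ``worst-case Gaussian noise'' comparison needs more care than stated.
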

\begin{proof}
First, we note that adversarial sensors have the knowledge of the transmitter encoding functions, and hence the adversarial encoding functions will be in the same form as the transmitters functions but with a negative sign i.e., $\boldsymbol g_A(\cdot)=- \sqrt \frac{P_A}{P_T} \boldsymbol g_T(\cdot)$ since outputs are sent over an additive channel (see e.g., \cite{basar1986solutions,bansal1989communication} for a proof of this result). We next proceed to find the optimal encoding functions for the transmitters, given $\boldsymbol g_A(\cdot)=- \sqrt \frac{P_A}{P_T} \boldsymbol g_T(\cdot)$. From the data processing theorem, we must have 
\begin{equation}
\label{eq}
I(\boldsymbol U_1, \boldsymbol U_2, \ldots, \boldsymbol U_{M+K}; \boldsymbol {\hat S}) \leq  I(\boldsymbol X_1, \boldsymbol X_2, \ldots, \boldsymbol X_{M+K}; \boldsymbol Y) 
\end{equation}
 The left hand side can be lower bounded as:
\begin{align}
I(\boldsymbol U_1, \boldsymbol U_2, \ldots,\boldsymbol U_{M+K}; \boldsymbol {\hat S})\geq  R(D)
\end{align}
where $R(D)$ is derived in Appendix A. 
The right hand side can be upper bounded by 
\begin{IEEEeqnarray}{rCl}                                                                 
 I(\boldsymbol X_1,& \boldsymbol X_2& ,..., \boldsymbol X_{M+K}; \boldsymbol Y)   \\
& \stackrel{\text{ (a)}}{\leq} & \sum \limits_{i=1}^{N}I(X_1(i), \ldots , X_{M+K}(i);Y(i)) \\
 &\leq  & \max \sum \limits_{i=1}^{N}I(X_1(i), \ldots , X_{M+K}(i);Y(i)) \label{aa}\\
 &=  & \frac{1}{2} \sum \limits_{i=1}^{N} \log ( 1+\frac{1}{\sigma_Z^2} { \boldsymbol 1}^T R_X(i) { \boldsymbol 1} )\label{bb}
 \end{IEEEeqnarray}
  where  $R_X(i)$ is  defined as 
\begin{equation}
\{R_X(i)\}_{p,r} \triangleq\mathbb E\{X_p(i)X_r(i)\} \quad \forall p,r \in [1:M\!+\!K].
\end{equation}
Note that (a) follows from the memoryless property of the channel and the maximum in (\ref{aa}) is over the joint density over $X_{1}(i), \ldots,X_{M+K}(i)$ given the structural constraints on $R_X(i)$ due to the power constraints and also the fact that $\boldsymbol g_A(\cdot)=- \sqrt \frac{P_A}{P_T} \boldsymbol g_T(\cdot)$. It is well known that the maximum is achieved by the jointly Gaussian density for a given fixed covariance structure \cite{diggavi2001worst}, yielding (\ref{bb}).
Since logarithm is a monotonically increasing function, hence the optimal encoding functions $g_{m}^N(\cdot), m\in [1\!:\!M]$  equivalently maximize $\sum \limits_{p,r} \mathbb E\{X_p(i)X_r(i)\}$. Note that 
\begin{equation}
X_m(i)=g_{m}^N(\boldsymbol U_m)
\end{equation}
and hence  $g_{m}^N(\cdot), m\in [1\!:\!M]$ that maximize 
\begin{equation}
\sum \limits_{p=1}^{p=M+K}\sum \limits_{r=1}^{r=M+K}  \mathbb E\{g_{p}^N(\boldsymbol U_p)g_{r}^N(\boldsymbol U_r)\}
\end{equation}
 can be found by invoking  Witsenhausen's lemma (given in Appendix B) as  $g_{m}^N(\boldsymbol U_m)=\boldsymbol \alpha_N \boldsymbol U_m$ where $\boldsymbol \alpha_N=[\alpha_T,\ldots,\alpha_T]$. Finally, we obtain $J_3$ as an outer bound by equating the left and right hand sides of (\ref{eq}). The linear mappings in Theorem 2 achieve this outer bound and hence are optimal. 


 \end{proof}

\begin{cor}
Source-channel separation, based on digital compression and communications is strictly suboptimal for this setting. 
\end{cor}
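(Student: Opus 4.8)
The plan is to exhibit a strictly positive gap between the distortion of the best separation-based scheme and the saddle-point value $J_3$ attained by the uncoded mappings of Theorem~2. The key leverage is already present in the proof of Theorem~2: the uncoded scheme meets the data-processing inequality (\ref{eq}) with equality, so $J_3$ is in fact the globally optimal distortion, obtained by attaining the channel-side bound (\ref{bb}) with a \emph{coherent} input covariance $R_X(i)$. The single quantity that governs this bound is $\boldsymbol 1^T R_X(i)\boldsymbol 1=\mathrm{Var}\!\big(\sum_m X_m(i)\big)$, and for uncoded transmission all $M$ transmitter inputs are positively correlated scaled copies of the common source, so this variance carries the coherent-combining term that grows quadratically in $M$. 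My aim is to show that any genuinely digital scheme forfeits exactly this term, and that its loss strictly increases the distortion above $J_3$.

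First I would fix an arbitrary separation architecture: each transmitter compresses $\boldsymbol U_m$ into a rate-$R_m$ index (a distributed, CEO-type source code), maps the index to a channel codeword, and the receiver decodes the indices and reconstructs $\hat S$. Because the decoded messages are, to first order, mutually independent and independent of the realization of $\{S(i)\}$, the induced channel inputs satisfy $\mathbb E\{X_p(i)X_r(i)\}\to 0$ for $p\neq r$; hence $\boldsymbol 1^T R_X(i)\boldsymbol 1\to\sum_m \mathbb E\{X_m^2(i)\}\le \sum_m P_m$, with no coherent gain. Substituting this strictly smaller value into (\ref{bb}) yields a strictly smaller admissible sum-rate than the coherent uncoded scheme delivers, and, through the rate--distortion function $R(D)$ of Appendix~A together with (\ref{eq}), a correspondingly larger distortion. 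It then remains to equate the incoherent sum-rate with $R(D)$, solve for the separation distortion $J_{\mathrm{sep}}$, and verify algebraically that $J_{\mathrm{sep}}>J_3$, the discrepancy being precisely the coherent term that separation discards.

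The hard part will be making rigorous the claim that no separation scheme can recover the coherent-combining gain, i.e., that a rate-efficient digital code cannot make the channel inputs correlated across sensors or with the common source. I would argue that any such residual correlation is itself an analog (uncoded) component of the transmitted signal, so a scheme exploiting it is no longer a pure separation scheme; within the class of genuinely digital schemes the codewords are asymptotically independent of the source by the standard typicality arguments, which forces the incoherent bound $\boldsymbol 1^T R_X(i)\boldsymbol 1\le\sum_m P_m$. A secondary point is to pin down the jammer's response against a digital front end --- the worst-case additive (Gaussian) jammer --- and to incorporate it into $J_{\mathrm{sep}}$ before the comparison; since it merely inflates the effective channel noise while the coherent-gain deficit persists, the resulting $J_{\mathrm{sep}}$ remains strictly above $J_3$, establishing that separation is strictly suboptimal.
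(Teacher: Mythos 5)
Your route is genuinely different from the paper's. The paper's proof is a two-line reduction: once the adversary plays its equilibrium strategy $\boldsymbol g_A(\cdot)=-\sqrt{P_A/P_T}\,\boldsymbol g_T(\cdot)$, the system collapses to an adversary-free Gaussian sensor network of the type studied in \cite{gastpar2005power}, and the strict suboptimality of separation is simply inherited from that reference. You instead carry out what the paper only alludes to in its last sentence (``a more direct proof follows from the calculation of the separate source-channel coding performance''): you bound the separation performance from below by showing that digital codewords cannot produce the off-diagonal mass in $R_X(i)$, so $\boldsymbol 1^T R_X(i)\boldsymbol 1$ degrades from the coherent $O(M^2)$ term to $\sum_m P_m$, and then you push this loss through (\ref{eq}) and the CEO rate--distortion function of Appendix A. Your approach is more self-contained and makes the \emph{mechanism} of the loss (forfeited coherent combining) explicit, at the price of substantially more work; the paper's approach is shorter but outsources the entire content of the claim to \cite{gastpar2005power}.

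Two caveats on your version. First, the step you yourself flag as hard --- that no separation scheme can retain cross-sensor correlation --- is the whole theorem; your resolution (declare any residual correlation to be an analog component, hence outside the class of ``genuinely digital'' schemes) is essentially definitional, and to be rigorous you must either fix a precise definition of the separation class (independent channel codebooks, so that $\mathbb E\{X_p(i)X_r(i)\}=0$ for $p\neq r$ by construction, averaged over codebooks) or fall back on citing the reference, which is exactly what the paper does. Second, your treatment of the jammer can be simplified: since you only need a \emph{lower} bound on the separation cost, and the jammer maximizes, it suffices to evaluate the separation scheme against any single fixed jamming strategy (e.g., i.i.d.\ Gaussian noise at power $P_A$) rather than characterizing the worst-case response to a digital front end. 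With those repairs the argument goes through and recovers the paper's conclusion.
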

\begin{proof}
We first note that the optimal adversarial encoding functions must be the negative of that of the transmitters to achieve the saddle-point solution derived in Theorem 2. But then, the problem at hand becomes equivalent to a problem with no adversary which was studied in \cite{gastpar2005power}, where source-channel separation was shown to be strictly suboptimal. Hence, separate source-channel coding has to be suboptimal for our problem.  A more direct proof follows from the calculation of the separate source-channel coding performance.
\end{proof}

\begin{cor}
Coordination is essential for transmitter sensors, in the sense that  lack of  coordination strictly increases the overall cost. 
\end{cor}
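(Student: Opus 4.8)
The assertion is that the Stackelberg cost $J_3$ of Theorem~2 strictly exceeds the saddle-point cost $J_2$ of Theorem~1, so the plan has two stages: obtain the weak inequality $J_3 \ge J_2$ from a strategy-restriction argument, and then upgrade it to a strict one by comparing the two closed forms.

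For the weak inequality I would interpret ``coordination'' game-theoretically. The shared variable $\gamma$ is exactly the freedom of the minimizing coalition (the $M$ transmitters and the receiver, i.e.\ Player~1) to play a common-randomness \emph{mixed} strategy; its absence in Setting~II confines Player~1 to \emph{pure}, deterministic strategies, while the adversary is free to best-respond over its full strategy set in both cases. Letting $V(\cdot)=\max_{g_A}J(\cdot,g_A)$ be the worst-case cost of a Player~1 strategy, minimizing $V$ over the smaller pure-strategy set cannot decrease the value, so
\begin{equation}
J_3=\min_{\mathrm{pure}}V\;\ge\;\min_{\mathrm{mixed}}V=J_2 .
\end{equation}

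Strictness is the real content, and I would argue it from the structure of the optimal adversarial responses. Under coordination the sign randomization $\gamma\sim Bern(\tfrac12)$ cancels the cross terms $\pm\xi\,\mathbb E\{SX_k\}$ and $\pm\psi\,\mathbb E\{ZX_k\}$ in (\ref{c1})--(\ref{c2}), so the adversary can influence the cost only through second-order statistics, i.e.\ as uncorrelated noise of power $K^2P_A$, while the effective source gain stays at $M\alpha_T$. Remove coordination and the adversary instead aligns in opposition, $\alpha_A=-\sqrt{P_A/(\sigma_S^2+\sigma_{W_A}^2)}$, lowering the effective gain from $M\alpha_T$ to $M\alpha_T+K\alpha_A$. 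I would make this quantitative by writing each cost as $J=\sigma_S^2/(1+\mathrm{SINR})$; since $J$ is strictly decreasing in the SINR, $J_3>J_2$ is equivalent to a strict drop in the effective SINR once $\alpha_A$ is substituted.

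The delicate point, and the step I expect to be the main obstacle, is that the opposition strategy is genuinely a trade-off for the adversary: it injects \emph{less} raw interference than the Gaussian jammer of Setting~I (here $K^2\alpha_A^2\sigma_{W_A}^2<K^2P_A$) while simultaneously cancelling signal. Both the numerator and the denominator of the SINR therefore shrink, so the weak inequality does not by itself settle the sign; the crux is to verify that the signal-gain reduction strictly dominates the reduction in injected noise for every $K\ge1$, $P_A>0$. This is a careful but routine algebraic manipulation of the two closed forms, sharpest in the extreme symmetric case $M\alpha_T+K\alpha_A\to0$, where the source is fully cancelled and $J_3\to\sigma_S^2$, making the strict gap transparent.
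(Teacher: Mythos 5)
Your proposal is correct in outline and is in fact considerably more substantive than the paper's own proof, which consists of the single sentence that the claim ``follows from the fact that $J_1<J_3$'' --- i.e., the paper simply asserts the strict inequality between the two closed-form costs and proves nothing. What you add that the paper does not: (i) a genuine game-theoretic derivation of the weak inequality, identifying $\gamma$ as common randomness and observing that $\min_{\mathrm{pure}}\max \ge \min_{\mathrm{mixed}}\max$ because the pure strategies of Player~1 form a subset of the mixed ones while the adversary best-responds in both settings; and (ii) the correct diagnosis that strictness is the entire content of the corollary and is \emph{not} automatic, precisely because the sign-opposed jammer injects less raw noise ($K^2\alpha_A^2\sigma_{W_A}^2<K^2P_A$) even as it cancels signal, so both the numerator and denominator of the effective SINR shrink. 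That observation is the right one and is absent from the paper.

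The remaining gap is that you defer the decisive step --- verifying that the SINR strictly drops for \emph{all} admissible parameters --- as ``routine algebra,'' checking it only in the limiting regime $M\alpha_T+K\alpha_A\to 0$. Two warnings here. First, a literal term-by-term comparison of (\ref{cost2}) with $J_3$ is treacherous because the printed formulas are not mutually consistent (the signal terms appear as $M\alpha_T^2\sigma_S^2$ in one and $(M\alpha_T+K\alpha_A)\sigma_S^2$ in the other, with $M$ versus $M^2$ discrepancies elsewhere), so you would first have to fix the intended expressions before the algebra means anything. Second, there is a cleaner route to strictness that avoids the computation entirely: against the deterministic linear encoder $X_m=\alpha_T U_m$, the i.i.d.\ Gaussian jammer of Theorem~1 still achieves exactly $J_2$ (its effect depends only on second-order statistics and not on $\gamma$), so $J_3\ge J_2$ with equality only if the Gaussian jammer is also a best response to the pure linear encoder; by the (almost sure) uniqueness asserted in Corollary~1 the adversary's best response to the deterministic encoder is the correlated sign-opposed jammer and is strictly better for the adversary, giving $J_3>J_2$. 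Either that argument or the completed algebra is needed to close the proof; as written, the strict inequality remains asserted rather than established.
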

\begin{proof}
Proof follows from the fact that $J_1<J_3$.
\end{proof}

\section{Discussion and Conclusion}
This paper is an initial attempt to analyze game theoretic source-channel coding within sensor networks. We have conducted a game-theoretical analysis of a specific Gaussian sensor network, specialized to symmetric transmitter and adversarial sensors.  Depending on the {\it coordination} capabilities of the sensors, we have analyzed two problem settings. The first setting allows coordination among the transmitter sensors. Coordination capability enables the transmitters to use randomized encoders. The saddle-point solution to this problem is randomized uncoded transmission for  the transmitters and the coordinated generation of i.i.d. Gaussian noise for the adversarial sensors. In the second setting, transmitter sensors cannot coordinate, and hence they use fixed, deterministic mappings. The solution to this problem is shown to be uncoded communication with linear mappings for both the transmitter and the adversarial sensors, but with opposite signs. We note that coordination aspect of the problem is entirely due to game-theoretic considerations, i.e., if no adversarial sensors exist, optimal transmitter encoding functions do not need coordination.

Our analysis has uncovered an interesting result regarding coordination among transmitter nodes, and among adversarial nodes. If transmitter nodes can coordinate, then so must the  adversaries, i.e., all must generate the identical realization of  an i.i.d. Gaussian noise sequence. If transmitters cannot coordinate, adversarial sensors do not need to coordinate, and this saddle-point is at strictly higher cost than the one when transmitters can coordinate. 

Several questions still remain open and are currently under investigation, including  extensions of the analysis to vector sources and channels, which require a vector form of Witsenhausen's Lemma, an important research question in its own right; the  asymptotic (in the number of sensors $M$ and $K$) analysis of the results presented here; and extension of our analysis to asymmetric sensor networks and non-Gaussian settings. 
 

\begin{appendices}
\section{The Gaussian CEO Problem}
In the Gaussian CEO problem, an underlying Gaussian source $S \sim \mathcal N(0,\sigma_S^2) $ is observed under additive noise $\boldsymbol W \sim \mathcal N(\boldsymbol 0, R_W)$ as $\boldsymbol U=S+\boldsymbol W$. These noisy observations, i.e., $\boldsymbol U$, must be encoded in such a way that the decoder produces a good approximation to the original underlying source. This problem was proposed in \cite{viswanathan1997quadratic} and solved in \cite{oohama2005rate} (see also \cite{oohama1998rate,prabhakaran2004rate}). A lower bound for this function for the non-Gaussian sources within the ``symmetric" setting where all $U$'s have identical statistics was presented in \cite{gastpar2005lower}. Here, we simply extend the results in \cite{oohama1998rate} to asymmetric settings, following the approach in \cite{gastpar2005lower}, focusing on MSE distortion measure
\begin{equation}
D=\mathbb E \{(S-\hat S)^2\}
\label{eq4}
\end{equation}
 and  rate
\begin{align}
R=\min I(\boldsymbol U, \hat S)
\label{cc}
\end{align}
where $\boldsymbol U=S+\boldsymbol W$,   $\boldsymbol W \sim \mathcal N(\boldsymbol 0, R_W)$, and $R_W$ is an  $(M\!+\!K) \times (M\!+\!K)$ diagonal matrix where the first $M$ diagonal entries  are $\sigma^2_{W_T}$ and the remaining $K$  are $  \sigma^2_{W_A}$.
The minimization in (\ref{cc}) is over all conditional densities $p(\hat s|\boldsymbol u)$ that satisfy (\ref{eq4}). Distortion can be written as sum of two terms 
\begin{align}
D=&\mathbb E\{(S-T+ T-\hat S)^2\}, \\
    = &\mathbb E\{(S-T)^2\} +\mathbb E\{(T-\hat S)^2\},
    \label{eq5}
\end{align}
where $T\triangleq \mathbb E \{S| \boldsymbol U\}$. Note that  (\ref{eq5}) holds since 
\begin{equation}
\mathbb E\{(S-T)(\hat S-T)\}=0,
\end{equation} 
as the MMSE error is orthogonal to any function\footnote{Note that $\hat S$ is also a deterministic function of $\boldsymbol U$, since optimal reconstruction can always be achieved by deterministic codes.} of the observation, $\boldsymbol U$. The first term in (\ref{eq5}), i.e., the estimation error $D_{est} \triangleq \mathbb E\{(S-T)^2\} $ is constant with respect to $p(\hat s|\boldsymbol u)$, i.e., a fixed function of $\boldsymbol U$ and $S$. Hence, the minimization is over the densities that satisfy a distortion constraint of the form $\mathbb E\{(T-\hat S)^2\} \leq D_{rd}$ and  $R=\min I(\boldsymbol U; \hat S)$. Hence, we write (\ref{eq5}) as
\begin{equation}
D=D_{rd}+D_{est}.
\end{equation}
Note that due to their Gaussianity, $T$ is sufficient statistics of $\boldsymbol U$ for $S$, i.e., $S-T-\boldsymbol U$ forms a Markov chain in that order and  $T\sim \mathcal N(0,\sigma_T^2)$. Hence, $R=\min I(\boldsymbol U; \hat S)=\min I(T; \hat S)$ where minimization is over $p({\hat s}|t)$ that satisfy $\mathbb E\{(T-\hat S)^2\} \leq D_{rd}$, where all variables are Gaussian. This is the classical Gaussian rate distortion problem, and hence:
\begin{equation}
R=\frac{1}{2} \log (\sigma_T^2/D_{rd}).
\end{equation}
Noting that
\begin{equation}
T=\mathbb E \{S \boldsymbol U^*\}  (\mathbb E \{\boldsymbol U \boldsymbol U^*\} )^{-1} \boldsymbol U,
\end{equation}
and using standard linear estimation principles, we obtain 
\begin{equation}
\sigma_T^2=\sigma_S^2 \frac{ 1 }  {1+ \frac{\sigma_{W_A}^2 \sigma_{W_T}^2}{\sigma_S^2 \left(K  \sigma_{W_T}^2+M  \sigma_{W_A}^2 \right)}  },
\end{equation}
and 
\begin{equation}
D_{est}=\sigma_S^2 \frac{ \sigma_{W_T}^2 \sigma_{W_A}^2}{K  \sigma_{W_T}^2+M  \sigma_{W_A}^2+\sigma_{W_T}^2 \sigma_{W_A}^2 }.
\end{equation}


\section{Witsenhausen's Lemma}

In this section, we recall Witsenhausen's lemma \cite{witsenhausen1975sequences}, which is used in the proof of Theorem 2.

\begin{lemma}
Consider two sequences of i.i.d.  random variables $X(i)$ and $Y(i)$, generated from a joint density $P_{X,Y}$, and two (Borel measurable) arbitrary functions $f,g:\mathbb R\rightarrow \mathbb R$ satisfying 
\begin{IEEEeqnarray}{rrcll}
\mathbb E\{f(X)\}&=&\mathbb E\{g(Y)\}&=&0, \\
\mathbb E\{f^2(X)\}&=&\mathbb E\{g^2(Y)\}&=&1.
\end{IEEEeqnarray}  
Define 
\begin{equation}
\rho^*\triangleq\sup_{f,g} \mathbb E\{ f(X) g(Y)\}
\end{equation} 
Then for any (Borel measurable) functions $f_N, g_N:\mathbb R^N\rightarrow \mathbb R$ satisfying  
\begin{IEEEeqnarray}{rcl}
\mathbb E\{f_N(\boldsymbol X)\}&=&\mathbb E\{g_N(\boldsymbol Y)\}=0 ,\\
\mathbb E\{f_N^2(\boldsymbol X)\}&=&\mathbb E\{g_N^2(\boldsymbol Y)\}=1,
\end{IEEEeqnarray}
for  length $N$ vectors $\boldsymbol X$ and $\boldsymbol Y$, we have 
\begin{IEEEeqnarray}{rcl}
\sup_{f_N,g_N} \mathbb E\{f_N(\boldsymbol X )g_N(\boldsymbol Y)\} &\,\,\,\leq\,\,\, & \rho^*.
\end{IEEEeqnarray}
Moreover, the supremum above is attained by linear mappings, if  $P_{X, Y}$ is a bivariate normal density. 
\end{lemma}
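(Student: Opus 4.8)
The plan is to recognize $\rho^*$ as the operator norm of a conditional-expectation operator restricted to zero-mean functions, and then to obtain the bound by \emph{tensorizing} this norm across the $N$ independent coordinates. First I would set up the Hilbert spaces $L^2(P_X)$ and $L^2(P_Y)$ and define the operator $T:L^2(P_Y)\to L^2(P_X)$ by $(Tg)(x)=\mathbb E\{g(Y)\mid X=x\}$. Since $T$ fixes the constant function and maps the subspace $H$ of zero-mean functions into itself (because $\mathbb E\{(Tg)(X)\}=\mathbb E\{g(Y)\}$), a short Cauchy--Schwarz argument applied to the definition of $\rho^*$ shows that $\rho^*$ is exactly the operator norm of $T$ restricted to $H$; in particular $\rho^*\le 1$.

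The reduction to this operator norm would then proceed as follows. For admissible $f_N,g_N$ write $\mathbb E\{f_N(\boldsymbol X)g_N(\boldsymbol Y)\}=\mathbb E\{f_N(\boldsymbol X)\,\tilde g(\boldsymbol X)\}$, where $\tilde g(\boldsymbol x)\triangleq\mathbb E\{g_N(\boldsymbol Y)\mid\boldsymbol X=\boldsymbol x\}=(T_N g_N)(\boldsymbol x)$ and $T_N$ is the analogous conditional-expectation operator on the product space. By Cauchy--Schwarz and $\|f_N\|=1$, this is at most $\|T_N g_N\|$. Since $g_N$ is zero-mean and unit-norm, it therefore suffices to show that $T_N$, restricted to the zero-mean subspace of $L^2(P_{\boldsymbol X})$, has norm at most $\rho^*$.

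The heart of the argument, and the step I expect to be the main obstacle, is the tensorization $\|T_N|_{\mathrm{zm}}\|=\rho^*$. Because the $N$ pairs are i.i.d.\ we have $T_N=T^{\otimes N}$. I would decompose each factor as $L^2(P_X)=\mathbb R\cdot 1\oplus H$ and expand the product, yielding the orthogonal splitting $L^2(P_{\boldsymbol X})=\bigoplus_{S\subseteq[N]}V_S$ indexed by subsets $S$, where $V_S$ carries a copy of $H$ on each coordinate in $S$ and constants elsewhere. Since $T$ fixes constants and preserves $H$, the operator $T_N$ is block diagonal with respect to this splitting, acting on $V_S$ as $(T|_H)^{\otimes|S|}$; by multiplicativity of the operator norm under tensor products, its norm on $V_S$ equals $(\rho^*)^{|S|}$. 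The zero-mean subspace is $\bigoplus_{S\neq\emptyset}V_S$, so its norm is $\max_{S\neq\emptyset}(\rho^*)^{|S|}$, and because $\rho^*\le 1$ this maximum equals $\rho^*$ (attained at $|S|=1$). Combining this with the previous paragraph gives $\mathbb E\{f_N(\boldsymbol X)g_N(\boldsymbol Y)\}\le\rho^*$, the claimed inequality.

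For the final assertion in the bivariate normal case, I would invoke the Hermite/Mehler expansion. For jointly Gaussian $(X,Y)$ with correlation $\rho$, the normalized Hermite polynomials form an orthonormal eigenbasis of $T$ with $Th_n=\rho^n h_n$; the largest eigenvalue on $H$ (i.e.\ over $n\ge 1$) is $|\rho|$, attained by the linear function $h_1$, so $\rho^*=|\rho|$ with a linear optimizer. The block structure above then shows the product-level optimum is again attained within a single coordinate, and, by the symmetry of the i.i.d.\ model, equivalently by the symmetric linear combination of coordinates, so it is linear; this completes the proof. The only delicate point is the tensorization step, where one must be careful that restricting $T^{\otimes N}$ to the zero-mean subspace genuinely selects the $|S|=1$ blocks and does not inflate the norm; the Cauchy--Schwarz reduction and the Gaussian eigenvalue computation are comparatively routine.
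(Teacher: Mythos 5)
Your argument is correct, but note that the paper does not actually prove this lemma: Appendix B merely \emph{recalls} it and cites Witsenhausen's 1975 paper, so there is no in-paper proof to match. Your tensorization proof is a valid, self-contained substitute and is essentially the modern operator-theoretic route: identify $\rho^*$ with $\|T\|$ on the zero-mean subspace $H$, use $T_N=T^{\otimes N}$ (which holds because product functions are dense and the pairs are i.i.d.), observe the block-diagonal action on $\bigoplus_S V_S$ with block norms $(\rho^*)^{|S|}\le\rho^*$, and finish with Cauchy--Schwarz; the Hermite/Mehler computation correctly yields $\rho^*=|\rho|$ with a linear optimizer in the Gaussian case. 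Witsenhausen's original argument reaches the same tensorization identity $\rho^*(\boldsymbol X,\boldsymbol Y)=\max_i\rho^*(X_i,Y_i)$ by a more elementary conditioning/induction on independent pairs rather than an explicit Hilbert-space direct-sum decomposition; your version is longer to set up but makes the structure (which blocks carry the norm, and why the optimizer lives in a single coordinate) completely transparent. Two cosmetic points you should fix if you write this up: since $T$ maps $L^2(P_Y)$ to $L^2(P_X)$, the Hermite polynomials form a singular system rather than an eigenbasis (the relation is $Th_n^Y=\rho^n h_n^X$), and in your reduction step the relevant zero-mean subspace is that of $L^2(P_{\boldsymbol Y})$ (the domain of $T_N$), not of $L^2(P_{\boldsymbol X})$. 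Neither affects the validity of the proof.
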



\end{appendices}

\bibliographystyle{IEEEbib}

\bibliography{ref}

\end{document}